\newcommand{\calX}{\mathcal{X}}
\newcommand{\calY}{\mathcal{Y}}
\newif\ifcommentson\commentsonfalse
\newcommand{\commentsize}[0]{.45\textwidth}
\newcommand{\commentTM}[1]{\begin{center} \parbox{\commentsize}{\textbf{\textcolor{black}{Comment T.}} \textcolor{red}{#1 }}\end{center}}
\newcommand{\commentYK}[1]{\begin{center} \parbox{\commentsize}{\textbf{\textcolor{black}{Comment Y.}} \textcolor{red}{#1} }\end{center}}
\newcommand{\replyTM}[1]{\begin{center} \parbox{\commentsize}{\textbf{Reply T.} \textcolor{blue}{#1} }\end{center}}
\newcommand{\replyYK}[1]{\begin{center} \parbox{\commentsize}{\textbf{Reply Y.} \textcolor{blue}{#1} }\end{center}}
\newcommand{\commentT}[1]{\marginpar{\footnotesize \color{red} {\bf T:} \textsf{\scriptsize #1}}}
\newcommand{\commentY}[1]{\marginpar{\footnotesize \color{red} {\bf Y:} \textsf{\scriptsize #1}}}
\newcommand{\replyT}[1]{\marginpar{\footnotesize \color{red} {\bf T:} \textsf{\scriptsize #1}}}
\newcommand{\replyY}[1]{\marginpar{\footnotesize \color{red} {\bf Y:} \textsf{\scriptsize #1}}}
\newcommand{\commentTM}[1]{}
\newcommand{\commentYK}[1]{}
\newcommand{\replyTM}[1]{}
\newcommand{\replyYK}[1]{}
\newcommand{\commentT}[1]{}
\newcommand{\commentY}[1]{}
\newcommand{\replyT}[1]{}
\newcommand{\replyY}[1]{}
\newcommand{\pagelimitmarker}[1]{
\section*{\textcolor{red}{\ifthenelse{\thepage>#1}{\huge Exceeding the page limit}{\Large Within the page limit}}}
\section*{\textcolor{red}{Page Limit =  #1 ~~~ Current Page = $\thepage$}}}
\newtheorem{theorem}{Theorem}
\newtheorem{proposition}[theorem]{Proposition}
\newtheorem{definition}{Definition}
\newtheorem{example}{Example}
\newenvironment{proof}{\noindent\textit{Proof: }}{\hspace*{\fill} $\Box$\\}
\newcommand{\eqdef}{\ensuremath{\stackrel{\mathrm{def}}{=}}}
\newcommand{\Dists}{\mathbb{D}} % distributions
\newcommand{\cals}{\mathcal{S}}
\newcommand{\linprog}{\textsf{linprog}}
\newcommand{\QPL}{Q^{\PL}}
\newcommand{\PL}[0]{\mathsf{PL}}
\newcommand{\OptQL}[0]{\mathsf{OptQL}}
\newcommand{\Del}[0]{\mathsf{Del}}
\newcommand{\PLdel}[0]{\PL{}\!\mathsf{-}\Del{}}
\newcommand{\OptQLdel}[0]{\OptQL{}\!\mathsf{-}\Del{}}
\newcommand{\hiddenproof}[1]{}
\begin{document}

\title{On the Anonymization of Differentially Private Location Obfuscation\\
%{\footnotesize \textsuperscript{*}Note: Sub-titles are not captured in Xplore and should not be used}
\thanks{
This work was partially supported by JSPS KAKENHI Grant Number JP17K12667, JP16K16069, and by JSPS and Inria under the project LOGIS of the Japan-France AYAME Program.
%Yusuke Kawamoto was supported by JSPS KAKENHI Grant Number JP17K12667.
%Takao Murakami was supported by JSPS KAKENHI Grant Number JP16K16069.
%Both are supported by JSPS and Inria under the project LOGIS of the Japan-France AYAME Program.
}
}

\author{\IEEEauthorblockN{Yusuke Kawamoto}
\IEEEauthorblockA{\textit{National Institute of Advanced}\\ \textit{Industrial Science and Technology (AIST)} \\
Tsukuba, Japan \\
%yusuke.kawamoto.aist@gmail.com
~}
\and
\IEEEauthorblockN{Takao Murakami}
\IEEEauthorblockA{\textit{National Institute of Advanced}\\ \textit{Industrial Science and Technology (AIST)} \\
Tokyo, Japan}
}

\maketitle

\begin{abstract}
Obfuscation techniques in location-based services (LBSs) have been shown useful to hide the concrete locations of service users, whereas they do not necessarily provide the anonymity.
We quantify the anonymity of the location data obfuscated by the planar Laplacian mechanism 
and that by the optimal geo-indistinguishable mechanism 
of Bordenabe et al. 
%.
We empirically show that the latter provides stronger anonymity than the former 
in the sense that more users in the database satisfy $k$-anonymity.
To formalize and analyze such approximate anonymity we introduce the notion of asymptotic anonymity.
Then we show that the location data obfuscated by the optimal geo-indistinguishable mechanism can be anonymized by removing a smaller number of users from the database.
Furthermore, we demonstrate that the optimal geo-indistinguishable mechanism has better utility both for users and for data analysts.
%*CRITICAL: Do Not Use Symbols, Special Characters, Footnotes, or Math in Paper Title or Abstract.
\end{abstract}

% No keywords for ISITA submission
%\begin{IEEEkeywords}
%attack, location privacy
%\end{IEEEkeywords}

\pagestyle{plain}
\thispagestyle{plain}

\allowdisplaybreaks[1]

\section{Introduction}
\label{sec:intro}

Location-based services (LBSs) have been increasingly employed in a variety of applications, including navigation, resource-trucking, 
%information retrieval,
recommendation, advertising, games, and authentication.
%
%to control the features of information services based on the geographic locations of users
One of the popular applications has been to discover interesting locations from collected location data and provide them for third parties.
%~\cite{Zheng:09:WWW}.
When the providers of LBSs publish some geographic locations 
%and trajectories 
of users, 
%mainly for the purpose of the services,
% this may breach the user privacy 
% these data may cause a privacy breach of the users' personal information
the accurate locations may reveal private information, 
%could breach the user privacy, 
such as home addresses, health conditions, and political orientation.

To prevent or mitigate the privacy breach, many location obfuscation techniques have been proposed to hide accurate locations of users while providing their approximate information used in LBSs.
For example, the \emph{dummy location insertion}~\cite{Kido:05:ICDE} generates $k-1$ dummy points and makes a user's location indistinguishable among a set of $k$ locations, which provides \emph{$k$-anonymity}.
The \emph{spacial cloaking technique}~\cite{Gruteser:03:MobiSys} chooses a sufficiently large region that includes $k$ indistinguishable locations to achieve $k$-anonymity.
The \emph{location perturbation technique}~\cite{Machanavajjhala:08:ICDE} adds to each location a controlled random noise and guarantees \emph{differential privacy}, independently of any side information that an adversary may possess.

Such perturbation techniques have been developed to construct more practical mechanisms for location obfuscation.
The \emph{planar Laplacian mechanism}~\cite{Andres:13:CCS} satisfies \emph{geo-indistinguishability}, an extended notion of differential privacy to the Euclid distance.
The \emph{optimal geo-indistinguishable mechanism}~\cite{Bordenabe:14:CCS} minimizes the quality loss caused by the perturbation while preserving geo-indistinguishability.
%Andr\'{e}s \textit{et al.} ~\cite{Andres:13:CCS} 
%Bordenabe \textit{et al.} ~\cite{Bordenabe:14:CCS} 

Although these geo-indistinguishable mechanisms hide the concrete locations, 
no prior work has investigated the relationships between geo-indistinguishability and anonymity to our knowledge.
In this paper, we show geo-indistinguishability does not guarantee to provide $k$-anonymity.
%as we will show in this paper.
This means that the location data obfuscated by geo-indistinguishable mechanisms might be vulnerable to re-identification attacks (e.g.,~\cite{Montazeri:16:ISITA,Murakami:18:ISITA}) for instance when the LBS provider shares the obfuscated data with a malicious data analyst.
Moreover, such leakage of user identity information can be efficiently detected and quantified using an automated tool such as~\cite{chothia2013tool,ChothiaKN14:esorics}.
%Such user identity information leaked by the obfuscated data can also be quantified using an automatic analysis tool~\cite{chothia2013tool}.

%re-identification attacks (e.g.,~\cite{Montazeri:16:ISITA}) may also be found on the location data obfuscated by the geo-indistinguishable mechanisms.
%\cite{Shokri:11:SP} shows that a location obfuscation does not provide $k$-anonymity.

In this work we empirically explore the relationships among obfuscation, anonymity, and utility for users and for data analysts in geo-indistinguishable location obfuscation.
%In this work we consider users that wish to obtain some information related to their locations (e.g., a list of restaurants near their current location) from a LBS (location-based service) provider, but 
In particular, we propose a method for effectively anonymizing the obfuscated data by deleting some data before publishing them to third parties.
The overview of the method is shown in Fig.~\ref{fig:overview}.

\begin{figure}[t]
   \centering
   \includegraphics[width=0.99\linewidth]{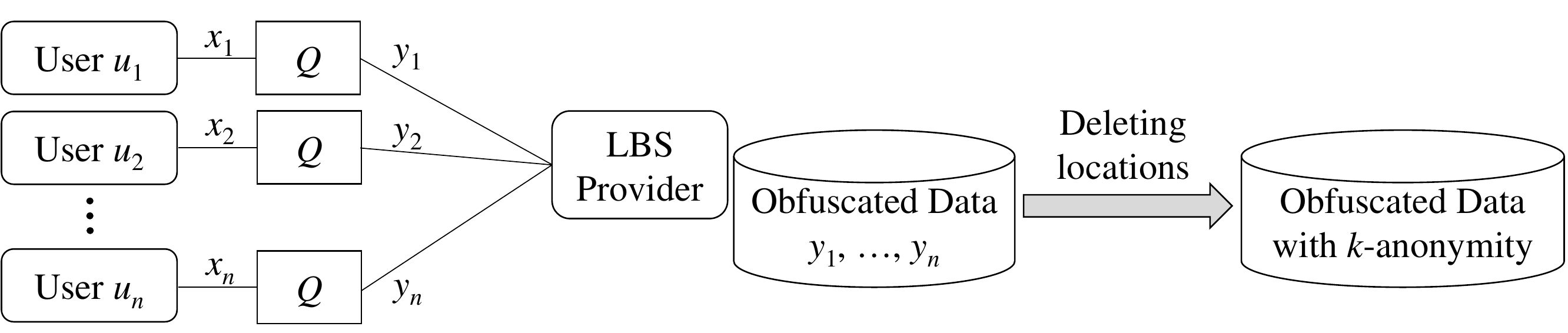}
   \caption{Overview of the proposed method. Each user $u_i$ obfuscates a location $x_i$ using a mechanism $Q$ and sends the obfuscated location $y_i$ to the LBS provider, which anonymizes the collected data to publish them.}
   \label{fig:overview}
\end{figure}

The contributions of this paper are summarized as follows:
\begin{itemize}
\item We evaluate the anonymity of the location data obfuscated by two location obfuscation mechanisms: $\PL$ (the planar Laplacian mechanism) and $\OptQL$ (the optimal geo-indistinguishable mechanism).
We empirically show that $\OptQL$ satisfies stronger anonymity than $\PL$.
%in the sense that the former requires to delete a smaller number of users than the latter to achieve $k$-anonymity.

\item We propose the notion of \emph{$(\kappa, \alpha)$-asymptotic anonymity}, which generalizes $k$-anonymity to an approximate anonymity of sampled users.
%Then this notion is used to analyze the anonymization of obfuscated location data.

\item We show that the location deletion method, which simply removes the locations of the users who do not satisfy $k$-anonymity, makes the location dataset $k$-anonymous while preserving $\varepsilon$-geo-indistinguishability.
In particular, we demonstrate that $\OptQL$ requires to delete a smaller number of users than $\PL$ to achieve $k$-anonymity.

\item We demonstrate by experiments that the utility for users and for data analysts is better in $\OptQL$ than in $\PL$.
\end{itemize}

%\section{Related Work}
%\label{sec:related}

\section{Preliminaries}
\label{sec:preliminaries}

For a finite set $\cals$, we denote by $\#\cals$ the number of elements in $S$, and by $\Dists\cals$ the set of all probability distributions over~$\cals$.

\subsection{Obfuscation Mechanism}
\label{sub:notation}

In this work we consider a number $n$ of users each reporting some rough information $y$ on his single geographic location $x$ to an LBS (location-based service) provider while keeping the exact location $x$ hidden from the provider.
To compute an obfuscated location $y$, each user uses a \emph{location obfuscation mechanism} that adds a certain noise to $x$ and outputs it as $y$.

Formally, let $\calX$ be a finite set of all possible locations of the users, and $\calY$ be a finite set of all (possibly fake) locations reported by the users.
Then a \emph{location obfuscation mechanism} (or simply an \emph{obfuscater}) is a probabilistic algorithm $Q: \calX\rightarrow\Dists\calY$ that, given an original location $x$, outputs a \emph{reported location} $y$.
We denote by $Q_{xy}$ the conditional probability that the mechanism $Q$ outputs $y$ given input~$x$.

The probability distribution of the original locations is represented by the prior $\pi$ over $\calX$, and the prior probability of a location $x$ is denoted by $\pi_x$.

\subsection{Geo-indistinguishability}
\label{sub:Geo-IND}

%The geo-indistinguishability~\cite{Andres:13:CCS} is a useful notion for location privacy.
\emph{Geo-indistinguishability}~\cite{Andres:13:CCS} is a notion of location privacy that can be regarded as a variant of local differential privacy \cite{Duchi:13:FOCS} 
in which the privacy budget $\varepsilon$ is multiplied by 
the Euclidean distance $d(x,x')$ between locations $x$ and $x'$. 
%Here we describe geo-indistinguishability \cite{Andres:13:CCS} in detail. 
\begin{definition}[$\varepsilon$-geo-indistinguishability] \label{def:GI}\rm
Given $\varepsilon \geq 0$, 
an obfuscation mechanism $Q$ provides \emph{$\varepsilon$-geo-indistinguishability} 
if for any inputs $x,x' \in \calX$ and any output $y \in \calY$, we have: 
\begin{align*}
Q_{xy} \leq e^{\varepsilon d(x,x')} Q_{x'y}.
%\label{eq:GI}
\end{align*}
\end{definition}

Then the difference between $Q_{xy}$ and $Q_{x'y}$ are proportional to the distance between $x$ and $x'$.
%For close locations $x,x'$, $Q_{xy}$ is close to $Q_{x'y}$, while for remote locations $x,x'$, $Q_{xy}$ is substantially different from $Q_{x'y}$.
%
%It can be seen from (\ref{eq:GI}) that 
%when $x$ is closer to $x'$,
%(i.e., $d(x,x') \approx 0$), 
%$Q_{xy}$ is also closer to $Q_{x'y}$. 
%On the other hand, when $x$ is far away from $x'$, 
%$Q_{xy}$ can differ substantially from $Q_{x'y}$. 
This implies that geo-indistinguishability 
allows an adversary to infer approximate information about 
the original location (e.g., a user is in Paris), 
but hides the exact location (e.g., home address) from her. 
By relaxing the privacy requirements in this way, 
the amount of noise added to the location can be significantly reduced 
(compared to local differential privacy \cite{Duchi:13:FOCS}). 
Consequently, geo-indistinguishability is useful to implement practical LBSs such as the POI (point of interest) retrieval~\cite{Andres:13:CCS}.

\subsection{Planar Laplacian ($\PL$) Mechanism}
\label{subsec:planar-Laplacian}
The \emph{planar Laplacian ($\PL$) mechanism} \cite{Andres:13:CCS} is an example of the mechanism providing geo-indistinguishability. 
It generates a random noise according to a two-dimensional Laplace distribution, and obfuscates an original location $x$ by adding the noise to~$x$. 
In this paper we use a variant of the planar Laplacian mechanism, which outputs a symbol ``$\bot$'' when the obfuscated location is outside the area of interest $\calX$. 

Formally, the variant planar Laplacian mechanism $\QPL: \calX\rightarrow\Dists(\calX\cup\{\bot\})$ is defined by:
%The probability that $Q^{\PL}$ maps $x$ to $y$ is given by:
\begin{align*}
\QPL_{xy} = 
\begin{cases}
{\textstyle\frac{1}{c}} \cdot e^{-\varepsilon d(x,y)}
&	\text{(if $y \in \mathcal{X}$)} \\
1 - {\textstyle\frac{1}{c}} \cdot\sum_{y'\in\calX} e^{-\varepsilon d(x,y')}
&	\text{(if $y = \bot$)},
\end{cases}
%\label{eq:PL}
\end{align*}
%\begin{align}
%\QPL_{xy} = 
%\begin{cases}
%\alpha \cdot e^{-\varepsilon d(x,y)}	&	\text{(if $y \in \mathcal{X}$)} \\
%\beta_x								&	\text{(if $y = \bot$)},
%\end{cases}
%\label{eq:PL}
%\end{align}
where $c = \max_{x}\sum_{y'\in\calX} e^{-\varepsilon d(x,y')}$.
%where $c$ is the normalization factor given by:
%\begin{align*}
%c = \frac{1}{\max_{x}\sum_{y'%\in\calX} e^{-\varepsilon d(x,y')}}
%\end{align*}
%
Intuitively, $c$ is selected to have the best utility by preventing unnecessarily frequent outputs of $\bot$.

%where a normalization factor $\alpha$ ($\geq 0$) and 
%the probability $\beta_x$ of a location $x$ being mapped to outside $\calX$ are selected to satisfy the following constraints:
%\begin{itemize}
%\item For each $x$, $\QPL_{xy}$ is the distribution; i.e., 
%$\displaystyle\sum_{y \in \calY} \QPL_{xy} =~1$. 
%\item There exists at least one location that $\QPL$ does report as $\bot$; i.e.,
%$\min_{x \in \calX} \beta_x = 0$.
%\end{itemize}
%
%The planar Laplacian mechanism in \cite{Andres:13:CCS} does not consider 
%the case where $y$ can be outside $\calX$ 

\begin{proposition} \label{prop:PL-geo-ind}
$Q^{\PL}$ satisfies $\varepsilon$-geo-indistinguishability.
\end{proposition}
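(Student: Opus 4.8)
The plan is to verify the defining inequality of Definition~\ref{def:GI} directly, namely that $\QPL_{xy} \le e^{\varepsilon d(x,x')}\,\QPL_{x'y}$ holds for every pair of inputs $x,x'\in\calX$ and every output $y$, by splitting along the two branches of the piecewise definition of $\QPL$: the case $y\in\calX$ and the case $y=\bot$. The normalizing constant $\frac1c$ is input-independent, which is what makes the first branch clean and isolates all the difficulty in the second.

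For $y\in\calX$ both sides are governed by the density branch $\QPL_{xy}=\frac1c e^{-\varepsilon d(x,y)}$, so the common factor $\frac1c$ cancels and the claim reduces to $e^{-\varepsilon d(x,y)} \le e^{\varepsilon d(x,x')} e^{-\varepsilon d(x',y)}$. Since $\varepsilon\ge0$ and $t\mapsto e^{t}$ is increasing, this is equivalent to $d(x',y)-d(x,y)\le d(x,x')$, which is exactly the triangle inequality $d(x',y)\le d(x',x)+d(x,y)$ for the Euclidean metric $d$. Hence the density branch satisfies geo-indistinguishability with no further work; this is the substantive but entirely routine part of the argument.

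The delicate part, which I expect to be the main obstacle, is the reject output $y=\bot$. Writing $S_x \eqdef \sum_{y'\in\calX} e^{-\varepsilon d(x,y')}$, so that $\QPL_{x\bot}=1-\frac{S_x}{c}$, the required bound becomes
\[
1-\frac{S_x}{c} \;\le\; e^{\varepsilon d(x,x')}\Bigl(1-\frac{S_{x'}}{c}\Bigr).
\]
Two facts come for free: summing the per-location estimate from the first case over $y'$ gives $S_x \le e^{\varepsilon d(x,x')} S_{x'}$, and the definition $c=\max_{z}S_z$ gives $S_x,S_{x'}\le c$, so both $\bot$-probabilities are nonnegative. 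The difficulty is that naively combining these two facts does not close the inequality: the slack in $S_{x'}\le c$ can be large precisely when $S_{x'}$ is far from the maximizing sum, and in fact whenever $x$ attains the maximum one has $\QPL_{x\bot}=0$ while $\QPL_{x'\bot}$ may be strictly positive, which would force the reverse instance of the constraint to fail. I therefore expect the heart of the proof to concentrate here, and I would first pin down whether the $\bot$ symbol is regarded as subject to the geo-indistinguishability constraint at all (i.e. whether $\calY$ includes $\bot$) before committing to the algebra, since the validity of the displayed inequality is sensitive to that convention; granting that $\bot$ is excluded, the proposition follows immediately from the triangle-inequality argument of the first case.
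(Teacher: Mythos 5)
Your case split and the triangle-inequality computation for $y\in\calX$ are correct, and that computation is exactly the substance of the paper's argument; but the paper packages it differently. The paper's proof views $\QPL$ as a cascade: the standard planar Laplacian (continuous over the plane, never outputting $\bot$), which satisfies $\varepsilon$-geo-indistinguishability by the same triangle inequality you use, followed by the post-processing map that sends everything outside the area of interest to $\bot$; immunity of differential privacy to post-processing then yields the claim. That route is designed precisely to dispose of the case you flag as delicate: for the cascade mechanism, $\QPL_{x\bot}$ is the integral of the continuous density over the complement of the area of interest, and the pointwise inequality survives integration, so $\bot$ needs no separate treatment. However, your suspicion about the displayed formula is vindicated, and here your analysis is sharper than the paper's. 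With the normalization $c=\max_{z}\sum_{y'\in\calX}e^{-\varepsilon d(z,y')}$, the mechanism written in the paper is \emph{not} the cascade its proof describes: taking $x'$ to be a maximizer of the sum gives $\QPL_{x'\bot}=0$ while $\QPL_{x\bot}>0$ for any non-maximizing $x$ (three collinear points already exhibit this), so the constraint at $y=\bot$ fails, exactly as you argue. Consequently the displayed mechanism cannot arise as a post-processing of \emph{any} geo-indistinguishable mechanism, and the paper's proof does not literally apply to it. The proposition is therefore true either under the convention you propose (quantifying the constraint only over $y\in\calX$), or for the genuine cascade mechanism (with region probabilities given by integrals of the continuous density) rather than the formula as displayed. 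In short: your route is a direct verification plus an honest refutation at $\bot$; the paper's route gets the $\bot$ case for free, but only for a mechanism that differs from the one it wrote down.
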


\begin{proof}
$Q^{\PL}$ can be seen as a cascade of the standard planar Laplacian (that does not output $\bot$) and the post-processing algorithm that maps each $y\not\in\calX$ to $\bot$.
It is easy to see that by the triangle inequality, 
the standard planar Laplacian satisfies $\varepsilon$-geo-indistinguishability.
%In addition, an obfuscated location $y$ outside $\calX$ is remapped to $\bot$ via post-processing. 
%This mechanism is a variant of the planar Laplacian mechanism in \cite{Andres:13:CCS}, in which an obfuscated location $y$ outside $\calX$ is remapped to $\bot$ via post-processing. 
Since differential privacy is immune to post-processing,
%\cite{dwork2014algorithmic}, 
$Q^{\PL}$ provides $\varepsilon$-geo-indistinguishability. 
\end{proof}

\subsection{Optimal Geo-indistinguishability ($\OptQL$) Mechanism}
\label{sub:optimal_Geo-IND}

The planar Laplacian mechanism is efficiently computable while the utility of the reported location may not be optimal.
For this reason, Bordenabe \textit{et al.}~\cite{Bordenabe:14:CCS} propose an \emph{optimal geo-indistinguishable location obfuscation mechanism} $\OptQL$ that given a privacy budget $\varepsilon$, minimizes the quality loss (QL) that is defined as the expected value of the Euclidean distance, i.e.,
\begin{align*}
QL(\pi, Q, d) = {\textstyle\sum_{x,y}} \, \pi_{x}Q_{x y} d(x,y)
{.}
%\label{eq:QL}
\end{align*}

The mechanism $\OptQL$ can be obtained by solving a linear optimization problem that minimizes $QL(\pi, Q, d)$ while satisfying $\varepsilon$-geo-indistinguishability.
However, the computational complexity of this optimization is in $O(\#\calX^3)$.
To reduce this to $O(\#\calX^2)$, they show an approximation technique based on a spanning graph of the set of locations.
See~\cite{Bordenabe:14:CCS} for details.
%\commentY{YK will add some details on spanner, and the dilation factor $\delta$ if there is still some space after wring other sections.}

\subsection{$k$-Anonymity}
\label{subsec:k-anonymity}

The notion of \emph{$k$-anonymity}~\cite{Sweeney:02:IJUFKSa} of a user ensures that the user cannot be distinguished from at least $k-1$ other users being at the same location.
%We denote by $n$ the number of all users.
More formally, 
for a positive integer $k$, we say that the users at a location $y$ are $k$-anonymous if $n(y) \geq k$
where $n(y)$ is the number of the users who report $y$ as their locations.
%for each location $y$ we denote by $n(y)$ the number of users who reports $y$ as their locations.
%
%Given a dataset of locations we say that the dataset 
We also say that a dataset of locations 
satisfies $k$-anonymity if for every location $y$ in the dataset, the users at $y$ are \emph{$k$-anonymous}.
In this definition $k$-anonymity depends only on the users that have the lowest level of anonymity, and does not take the other users into account.

\section{Anonymization of Obfuscated Location Data}
\label{sec:anonymization}

In this section we address some limitations in the definition of $k$-anonymity and introduce two anonymity notions that generalize $k$-anonymity. 
The first notion measures an obfuscater's capability of anonymization independently of the number $n$ of sampled users.
The second notion extends the first one to take into account the fact that different users in the dataset may have different levels of anonymity. 
%These notions are employed to evaluate approximate anonymity of sampled users who might satisfy different levels of anonymity.
%Specifically, they are applied to analyze the anonymization of obfuscated locations.
Finally, we present a simple solution for enhancing the anonymity of the obfuscated data while preserving geo-indistinguishability.
%\replyT{I revised "These notions ... obfuscated locations." so that the explanation is more consistent with Section III-A. Please check.}
%\replyY{Modified here. Please check.}

\subsection{Limitations in the Definition of $k$-Anonymity}
\label{subsec:limitation-k-anonymity}

$k$-anonymity is not always useful to evaluate the level of anonymity in the presence of sampled users.

First, $k$-anonymity in the context of location privacy depends on the number $n$ of the LBS's users in a sample data, and does not solely express an obfuscater $Q$'s capability of anonymization.
For instance, if the number $n$ of sampled users increases then $k$-anonymity tends to hold for a larger value of $k$ (roughly proportionally to $n$) for the same $\pi$ and $Q$.
In other words, $k$-anonymity is not defined as a property of $(\pi, Q)$ independently of the number $n$ of sampled users.
%\replyY{Please check here.}
%\replyT{Looks good for me!}

Second, different users in the dataset may have different anonymity levels, 
whereas $k$-anonymity of the dataset depends only on the users that have the lowest level of anonymity.
Hence $k$-anonymity is not expressive enough to 
take into account the different anonymity levels of the other users.
%precisely evaluate the anonymity of s 
%although it is suited to describe 
%the lowest level of anonymity.
%\replyT{LGTM!}
%\replyY{changed again.}

\subsection{$\kappa$-Asymptotic Anonymity}
\label{subsec:asymptotic-anonymity}

To overcome the first limitation described in Section~\ref{subsec:limitation-k-anonymity}, we introduce a notion that expresses an obfuscater $Q$'s capability of anonymization independently of the number $n$ of sampled users.
Intuitively, for a $\kappa\in[0,1]$, we define the notion of  \emph{$\kappa$-asymptotic anonymity} as an extension of $k$-anonymity 
%where, as the number $n$ of users goes to infinity, 
where for any sufficiently large number $n$ of users,
each user is indistinguishable from roughly $n\cdot\kappa - 1$ other users.
%Given a number $n$ of users, an obfuscater $Q$ with a prior $\pi$ over $\calX$ provides \emph{$k(n)$-anonymity on average} if for all $x'\in\calX$, either $\sum_{x} \pi_{x} Q_{x x'} \geq k(n)$ or $\sum_{x} \pi_{x} Q_{x x'} = 0$.

Formally, this notion is defined using the probability $p(y)$ that the obfuscation mechanism $Q$ outputs $y$ as follows.

\begin{definition}[$\kappa$-asymptotic anonymity]
\label{def:asymptotic-anonymity}\rm
Given a threshold $\kappa\in[0,1]$, the users at a location $y$ are \emph{$\kappa$-asymptotically anonymous} if $p(y) > \kappa$.
Given a prior $\pi\in\Dists\calX$ and an obfuscater $Q:\calX\rightarrow\Dists\calY$, we say that $(\pi, Q)$ provides \emph{$\kappa$-asymptotic anonymity} if for all $y\in\calY$, $p(y) > 0$ implies $p(y) > \kappa$,
where $p(y) = \sum_{x} \pi_{x} Q_{x y}$.
%Given a threshold $\kappa\in[0,1]$, a prior $\pi$, and an obfuscater $Q$, we say that $(\pi, Q)$ provides \emph{$\kappa$-asymptotic anonymity} if for all $y\in\calY$, if $\sum_{x} \pi_{x} Q_{x y} > 0$ then $\sum_{x} \pi_{x} Q_{x y} > \kappa$.
\end{definition}

Note that $\kappa$ itself can be computed from $\pi$ and $Q$ independently of $n$.
When $(\pi,Q)$ provides $\kappa$-asymptotic anonymity, the number of users required to achieve $k$-anonymity is roughly given by $\frac{k}{\kappa}$.
%in Section~\ref{sub:anonymity-more-users}.
%
%Intuitively, $n\kappa$ is a lower bound on the expected number of users that provide $y$ as an obfuscated location
%\footnote{For each $y\in\calY$, let $\hat{p}(y) \eqdef \frac{n(y)}{n}$ be the empirical probability. Then $\hat{p}(y)$ converges to a value greater than $\kappa$ when $n$ goes to infinity.}.
%
%\footnote{More formally, for a sufficiently large number $n$ of users and for the empirical probability $\hat{p}(y) \eqdef \frac{n(y)}{n}$, 
%we have $n\cdot \hat{p}(y) > n\cdot\kappa$ with an overwhelming probability where the probability is taken over the choice of randomness in $\pi$ and $Q$.}.

\begin{example}[Anonymity of the prior and posterior]\label{eg:prior-posterior-asymptotic}
Let us formalize the asymptotic anonymity before/after applying a mechanism $Q$.
The prior $\pi$ provides ($\min_x \pi_{x}$)-asymptotic anonymity while $(\pi, Q)$ provides ($\min_y \sum_x \pi_{x} Q_{x y}$)-asymptotic anonymity\footnote{Remarkably, the asymptotic anonymity contrasts with the Bayes-vulnerability (aka. converse of the Bayes risk~\cite{Chatzikokolakis:08:JCS}) in quantitative information flow.
Instead of minimization, the prior/posterior Bayes-vulnerabilities are respectively $\max_x \pi_{x}$ and $\sum_y \max_x \pi_{x} Q_{x y}$, and represent the probabilities of an adversary's correctly guessing $x$ in one attempt before/after observing~$y$.
%the prior Bayes-vulnerability $\max_x \pi_{x}$ and the posterior one $\max_y \sum_x \pi_{x} Q_{x y}$ respectively represent the probabilities that an adversary correctly guesses $x$ in one attempt before and after observing $y$.
}.
%for the identity function $I: \calX\rightarrow\calX$, $(\pi, I)$ provides $\min_x \pi_{x}$-anonymity.
%while $(\pi, Q)$ provides $\min_y \sum_x \pi_{x} Q_{x y}$.
To achieve $k$-anonymity before (resp. after) applying~$Q$, the number of users should be roughly $\frac{k}{\min_x \pi_{x}}$ (resp. $\frac{k}{\min_y \sum_x \pi_{x} Q_{x y}}$).
%These anonymity bounds can be used to roughly estimate how many users are required to achieve $k$-anonymity before/after the application of~$Q$.
\end{example}

For a large number $n$ of users, we can compute an approximate maximum value of $\kappa$ from the sample by $\min_y \hat{p}(y) = \min_y \frac{n(y)}{n}$, which converges to $\kappa$ quickly as shown in Fig.~\ref{fig:res_kappa}.

As we will see in Section~\ref{sub:k-anonymity-prior}, $\kappa$-asymptotic anonymity (resp. $k$-anonymity) holds only for small values of $\kappa$ (resp. $k$).
This implies that the obfuscation mechanism does not necessarily provide anonymity to all users although it hides the exact original locations in terms of geo-indistinguishability.
%\replyT{Should be corrected to "$(\kappa, \alpha)$-asymptotic anonymity of $Q^{\PL}$"?}
%\replyY{No, it's not defined yet.}
%\replyT{OK. I checked Section IV-B1.}

\subsection{$(\kappa, \alpha)$-Asymptotic Anonymity}
\label{subsec:asymptotic-anonymity-delta}
%\replyT{How about beginning with "To overcome the second limitation described in Section ..."?}
%\replyY{Changed here}
Similarly to $k$-anonymity, the definition of $\kappa$-asymptotic anonymity also suffers from the second limitation described in Section~\ref{subsec:limitation-k-anonymity}.
%Although $\kappa$-asymptotic anonymity can hold for a smaller threshold $\kappa$ for all users, many users might satisfy a larger level  $\kappa$ of asymptotic anonymity.
%To evaluate the anonymity of such multiple users, 
To evaluate the different levels of anonymity of different users,
we introduce another notion that relaxes $\kappa$-anonymity by allowing some rate $\alpha$ of errors.
Roughly speaking, the new notion expresses that given a sample data with $n$ users, at least $n(1-\alpha)$ users are $n\kappa$-anonymous.
%Strictly speaking, however, the randomness in $\pi$ and $Q$ may or not prevent $n\cdot\kappa$-anonymity.

\begin{definition}[$(\kappa,\alpha)$-asymptotic anonymity]
\label{def:asymptotic-delta-anonymity}\rm
Let $p(y) \eqdef \sum_{x} \pi_{x} Q_{x y}$.
Given a $\kappa\in[0,1]$ and an acceptable error rate $\alpha\in[0,1]$, $(\pi,Q)$ provides \emph{$(\kappa, \alpha)$-asymptotic anonymity} if
\[
\frac{ \sum_{y: p(y) > \kappa} p(y) }
     { \sum_{y: p(y) > 0} p(y) }
\geq 1-\alpha
{.}
\]
\end{definition}

%Then $(\kappa,0)$-asymptotic anonymity is identical to $\kappa$-asymptotic anonymity.

This notion can be used to roughly estimate the utility loss in anonymizing the location data.
When there are $n$ users in the dataset, at most $n\alpha$ users are not $n\kappa$-anonymous.
If we remove the locations data of these users, then the dataset will satisfy $n\kappa$-anonymity while the utility of the dataset deteriorates proportionally to the number $n\alpha$ of deleted users.

\subsection{Location Deletion Method ($\Del$) for $k$-Anonymity}
\label{subsec:post-processing}

As explained so far, $\varepsilon$-geo-indistinguishable mechanisms are useful to hide the exact locations from the LBS provider, whereas they may not be able to provide $k$-anonymity of the obfuscated location data.
%although it guarantees that the exact locations are hidden.
When the LBS provider wishes to publish such obfuscated data to third parties, a simple solution to achieve $k$-anonymity is what we call the \emph{location deletion method $\Del$}, i.e., to delete the obfuscated locations that do not satisfy $k$-anonymity.
Then the modified database satisfies $k$-anonymity while preserving $\varepsilon$-geo-indistinguishability thanks to the immunity to the post-processing.

More specifically, given a threshold $\kappa$, the minimum number of users that should be removed is approximately given by:
\begin{align*}
n\alpha_{\min} =
n \cdot\biggl(
\frac{ \sum_{y: 0 < p(y) < \kappa} p(y) }
     { \sum_{y: p(y) > 0} p(y) }
\biggr){,}
\end{align*}
where $p(y) = \sum_{x} \pi_{x} Q_{x y}$.
When $Q$ is a Laplacian mechanism, then all locations occur with non-zero probabilities,
and thus the approximate number of deleted users is 
$
n\cdot\sum_{y: p(y) < \kappa} p(y)
$.
We will demonstrate the effect of this combination of obfuscation and anonymization by experiments in Section~\ref{sec:exp}.

\section{Experimental Evaluation}
\label{sec:exp}

In this section we empirically compare the two obfuscation mechanisms $\PL$ and $\OptQL$, and illustrate how the location deletion method $\Del$ enhances the anonymity of obfuscated data 
and affects the utility for users and for data analysts.
%while preserving the $\varepsilon$-geo-indistinguishability.
%Then we evaluate the utility of the proposed method both for users and for data analysts.

\subsection{Experimental Set-up}
\label{sub:setup}
We performed experiments using the Foursquare dataset (Global-scale Check-in Dataset)~\cite{Yang_TIST15}. 
This dataset includes $33278683$ location check-ins by $266909$ users all over the world. 
In our experiments, we used the data in Manhattan, which consists of  
location check-ins by $14951$ users. 
We assumed that each user $u_i$ obfuscated a single location $x_i$ using an $\varepsilon$-geo-indistinguishable obfuscation mechanism $Q$, 
and sent the obfuscated location $y_i$ to the LBS provider. 
%Then the number $n$ of users is $14951$ in total.

We divided Manhattan into $20 \times 20$ regions with regular intervals.
Let $\calX$ be the set of these regions, and $\pi$ be the empirical distribution of the $14951$ users' locations over $\calX$.
We defined the distance $d(x,x')$ between two regions $x$ and $x'$ 
by the Euclidean distance between their central points.
Here we normalized the distance so that the distance between two adjacent regions is one. 

As an obfuscation mechanism $Q$, we employed 
%the following two mechanisms: for comparison: 
the planar Laplacian mechanism $\PL$ 
(in Section~\ref{subsec:planar-Laplacian})
and
the Optimal geo-indistinguishable mechanism $\OptQL$
(in Section~\ref{sub:optimal_Geo-IND}).
%\begin{enumerate}
%\item Planar Laplacian mechanism $\PL$ 
%(in Section~\ref{subsec:planar-Laplacian}). 
%\item Optimal geo-indistinguishable mechanism $\OptQL$
%(in Section~\ref{sub:optimal_Geo-IND}).
%\end{enumerate}
In $\OptQL$, 
we solved the optimization problem\footnote{In $\OptQL$ we set the dilation factor to be $\delta = 1.09$. 
} that minimizes QL while satisfying $\varepsilon$-geo-indistinguishability using 
the linear programming solver \linprog{} in MATLAB. 
For both $\PL$ and $\OptQL$, we set the privacy budget $\varepsilon$ to be 
$0.1$ to $1$, which have been widely used in the literature~\cite{Hsu:14:CSF}. 

After obtaining all obfuscated regions $y_1, y_2, \ldots, y_n$, 
we applied the location deletion method $\Del$ to remove the regions that do not satisfy $k$-anonymity (where $k$ is $10$ or $100$).
%Here we set $k$ to be $10$ or $100$. 
We denote by $\PLdel$ (resp.~$\OptQLdel$) 
the application of $\PL$ (resp.~$\OptQL$) post-processed by $\Del$.
%Note that $\OptQLdel$ is the proposed method, as described in Section~\ref{sec:proposal}.

\subsection{Experimental Results}
\label{sub:results}

We show the experimental results on anonymity and utility.

\begin{figure}[t]
   \centering
   \includegraphics[width=0.99\linewidth]{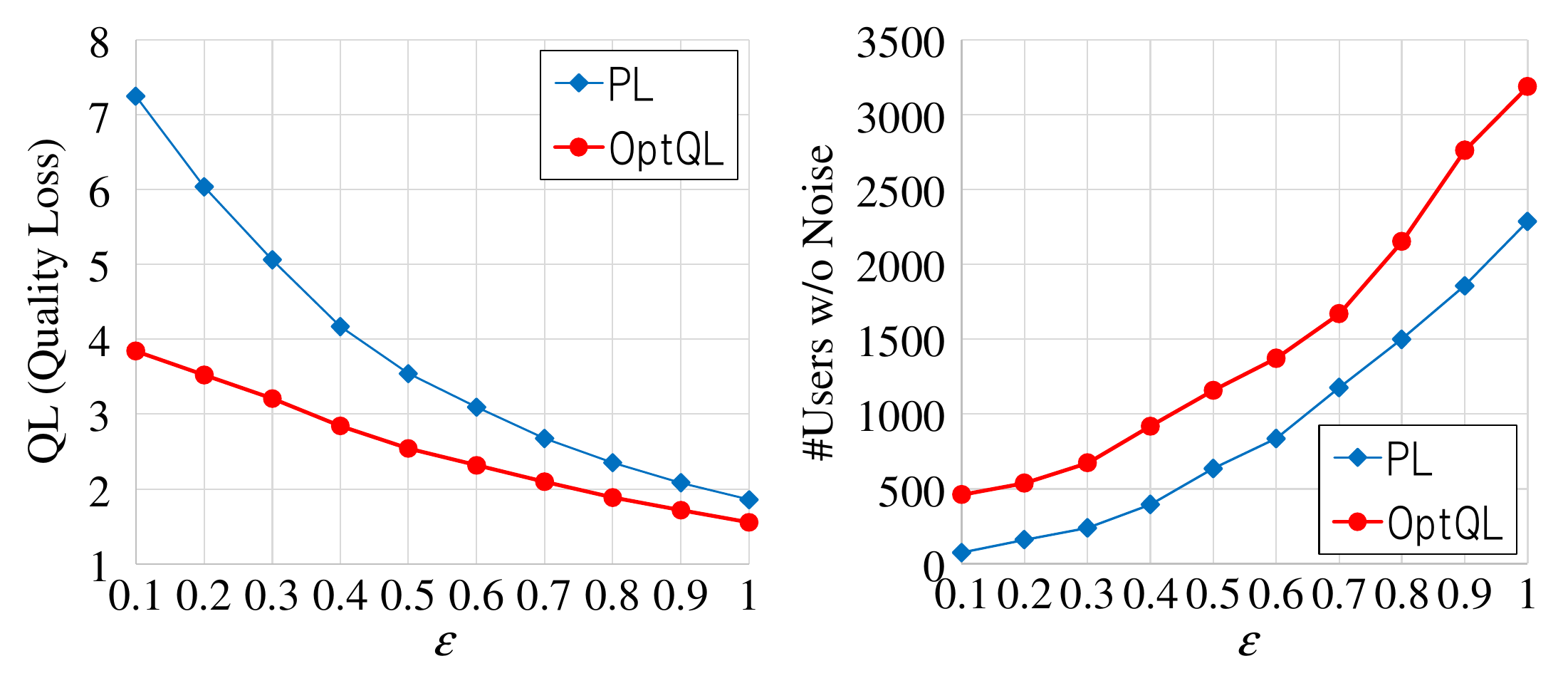}
% \vspace{-2mm}
   \caption{%Utility for users. 
   Trade-offs between the privacy budget $\varepsilon$ and the utility for users.
   As a utility the graph on the left uses QL (quality loss),
   and the graph on the right uses the number of users that remain at the same regions after obfuscation.
   As for $\PL$, we excluded the users who report $\bot$ as their location.
%(in $\PL$, we excluded the case where the obfuscated location $y_i$ is outside $\calX$; i.e., $y_i=\bot$).
   }
   \label{fig:res_utility}
\end{figure}

\subsubsection{$k$-anonymity before anonymization}
\label{sub:k-anonymity-prior}
By experiments we found that unless we add much noise, the obfuscation does not provide $k$-anonymity, i.e., $k = 1$ for a user.
Specifically, $k=1$ is provided by $\PL$ for $\varepsilon \geq 0.4$ and by $\OptQL$ for $\varepsilon \geq 0.2$.
%The location data obfuscated by $\PL$ does not satisfy $k$-anonymity (i.e., $k=1$) for $\varepsilon \geq 0.4$.
%Concerning $\OptQL$, $k=1$ for $\varepsilon \geq 0.2$.

\subsubsection{Utility for users}
In Fig.~\ref{fig:res_utility} we compare $\OptQL$ with $\PL$ in terms of the utility for users.
%We first compared $\OptQL$ with $\PL$ in terms of the utility for users. 
Specifically, 
we evaluated the quality loss, i.e., the average Euclidean distance $d(x_i,y_i)$ between the original region $x_i$ and the obfuscated region $y_i$.
We also evaluated the number of users who remain at the same region after obfuscation, i.e., $x_i = y_i$. 
%We show the results in Fig.~\ref{fig:res_utility}.
%trade-off between the privacy budget $\varepsilon$ and the utility for users 
As shown in Fig.~\ref{fig:res_utility}, for a larger $\varepsilon$, smaller noise is added, hence both $\PL$ and $\OptQL$ have better utility for users;
They decrease the quality loss, and increase the number of users remaining at the same regions. 
The results also demonstrate that $\OptQL$ outperforms $\PL$ in terms of the utility for users.
This is because $\OptQL$ chooses locations that minimize the expected distance, which also makes more users remain at the same regions.

\begin{figure}[t]
  \centering
   \includegraphics[width=0.99\linewidth]{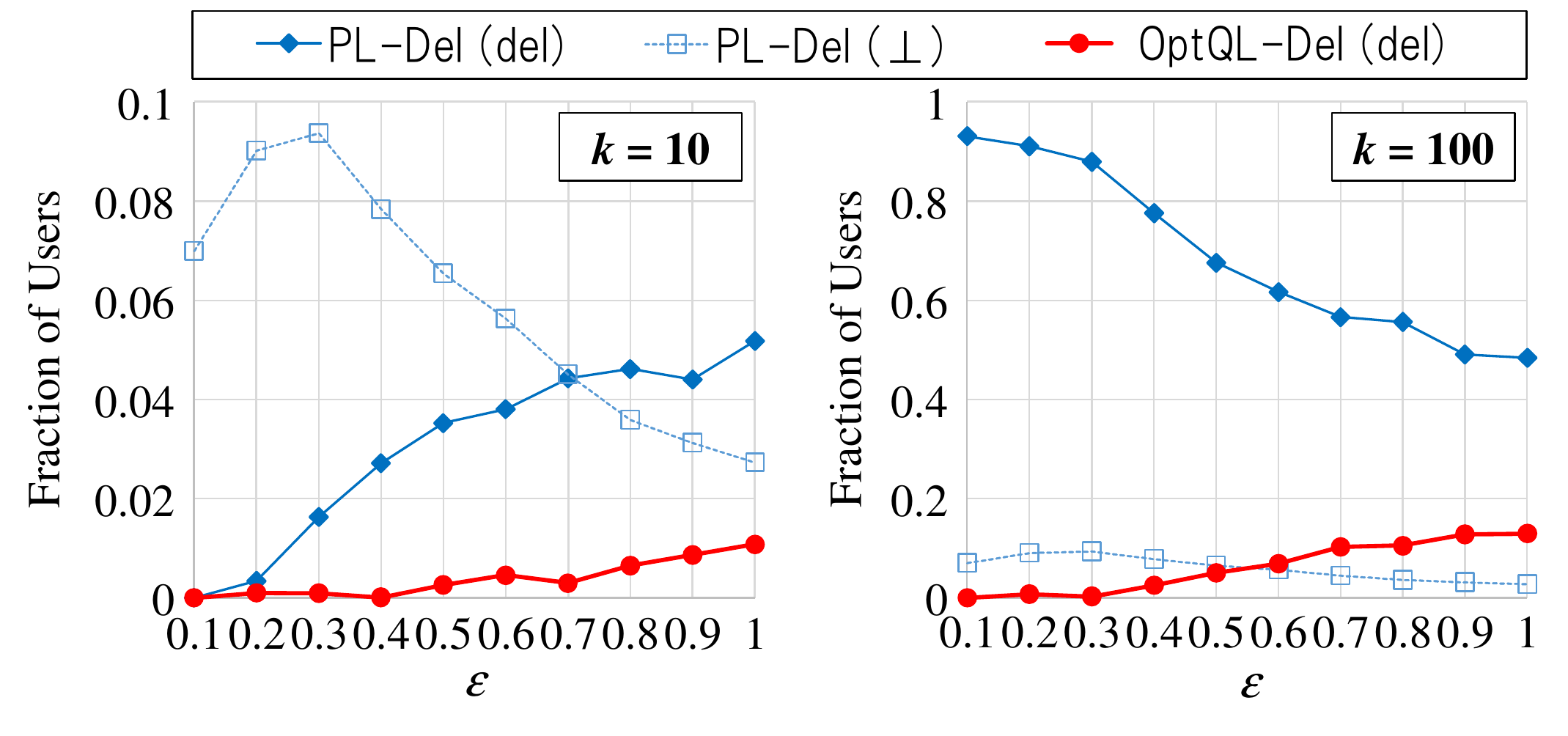}%
  \vspace{-2mm}
  \caption{Trade-offs between the privacy budget $\varepsilon$ and the utility for data analysts.
   The y-axis represents the fraction of deleted users necessary to satisfy $\kappa$-anonymity ($\PLdel$ ($\mathsf{del}$), $\OptQLdel$ ($\mathsf{del}$)), and that of users who output $\bot$ as reported locations ($\PLdel$ ($\bot$)), where $\kappa=6.689\times 10^{-4}$ ($k=10$) on the left and $\kappa=6.689\times 10^{-3}$ ($k=100$) on the right.}
%%   \caption{The number of deleted locations necessary to satisfy $k$-anonymity (left panel: $k=10$, right panel: $k=100$).}
   \label{fig:res_k-anonymity}
\end{figure}

\begin{figure*}[t]
\vspace{-2ex}
  \centering
  \subfloat[When the users reported their original locations.\label{fig:map:original}]{\includegraphics[width=0.32\linewidth]{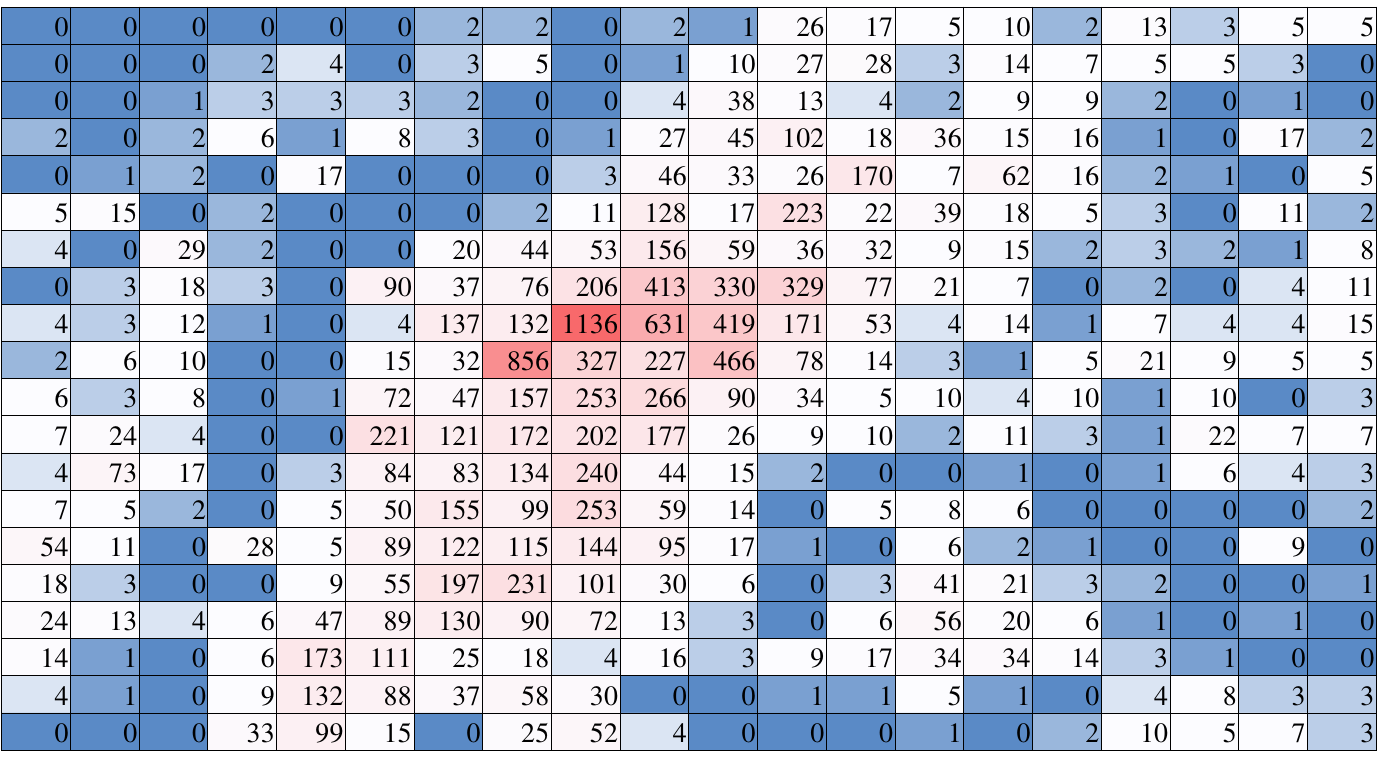}}~~
  \subfloat[When the users reported the locations obfuscated by $\PL$ ($\varepsilon = 1$).\label{fig:map:PL}]{\includegraphics[width=0.32\linewidth]{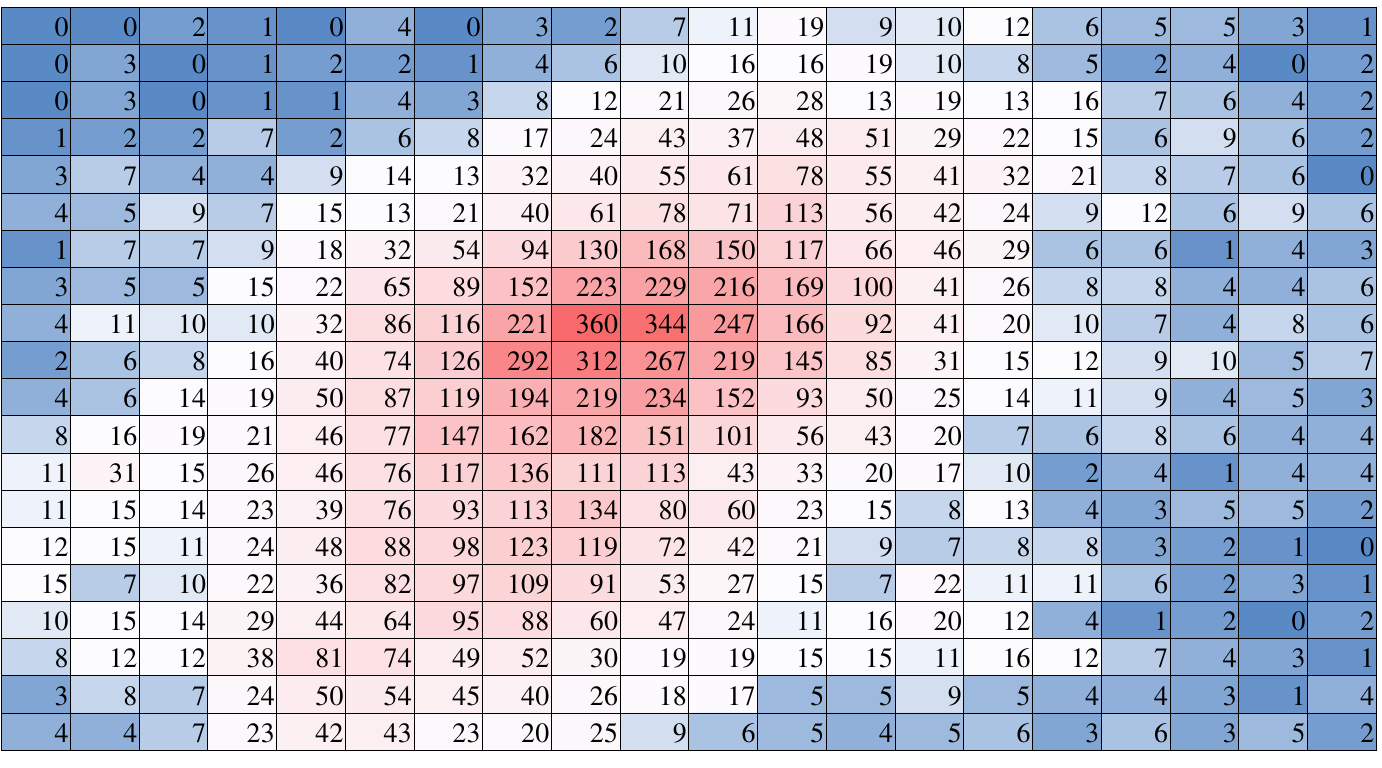}}~~
  \subfloat[When the users reported the locations obfuscated by $\OptQL$ ($\varepsilon = 1$).\label{fig:map:OptQL}]{\includegraphics[width=0.32\linewidth]{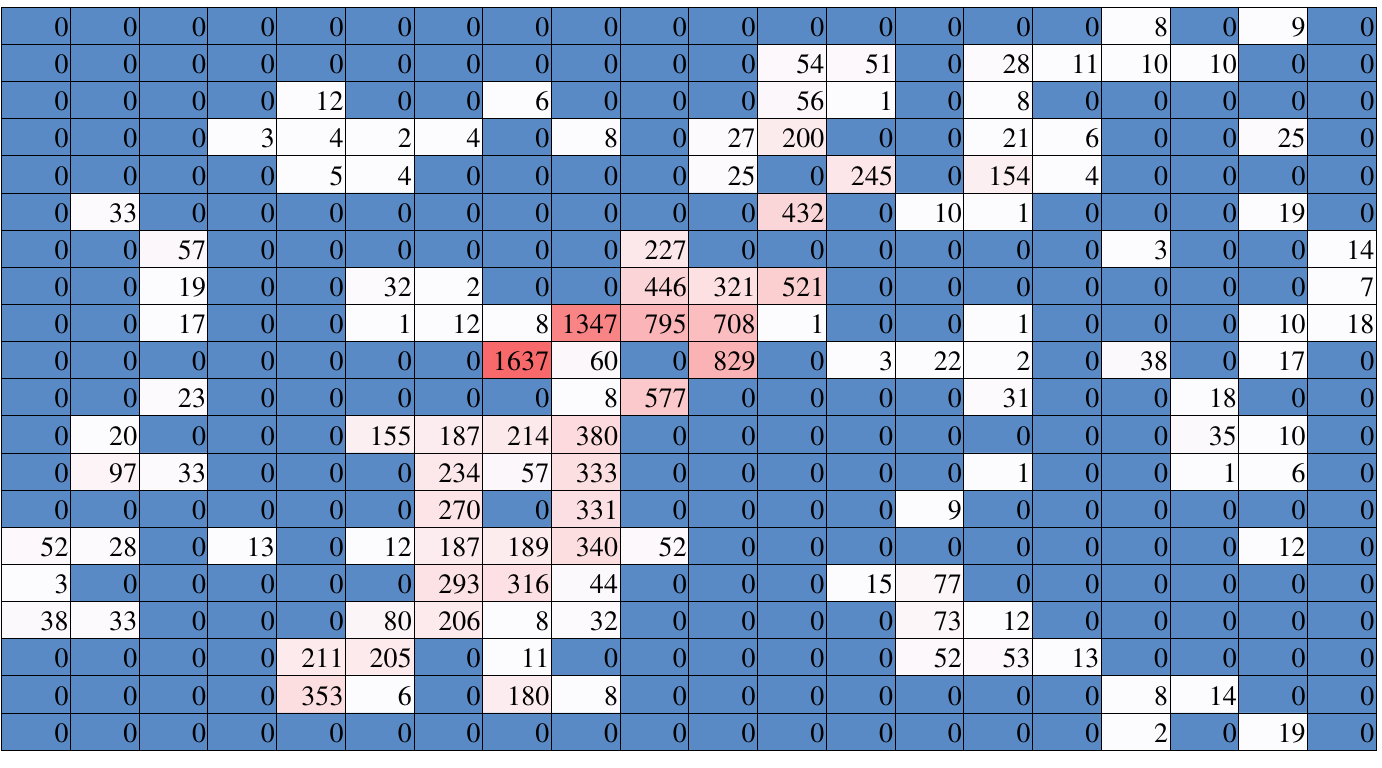}}
%
%   \begin{subfigure}[t]{0.32\textwidth}
%   \includegraphics[width=\linewidth]{fig/histogram_true}
%   \caption{When the users reported their original locations.}
%   \label{fig:map:original}
%   \end{subfigure}\hspace{1ex}\hfill
%   \begin{subfigure}[t]{0.32\textwidth}
%   \includegraphics[width=\linewidth]{fig/histogram_PL_e1}
%   \caption{When the users reported the locations obfuscated by $\PL$ ($\varepsilon = 1$).}
%   \label{fig:map:PL}
%   \end{subfigure}\hspace{1ex}\hfill
%   \begin{subfigure}[t]{0.32\textwidth}
%   \includegraphics[width=\linewidth]{fig/histogram_OptSQL_e1}
%   \caption{When the users reported the locations obfuscated by $\OptQL$ ($\varepsilon = 1$).}
%   \label{fig:map:OptQL}
%   \end{subfigure}\hspace{1ex}\hfill
   \caption{The maps of Manhattan that plot
the numbers of users having reported the regions as their (original/obfuscated) locations.
}
   %\caption{Histogram of the original/obfuscated locations (left panel: histogram of the original locations, middle panel: histogram of the obfuscated locations in $\PL$ ($\varepsilon = 1$), right panel: histogram of the obfuscated locations in $\OptQL$ ($\varepsilon = 1$)). }
   \label{fig:res_histogram}
\end{figure*}

\subsubsection{Utility for data analysts}
In Fig.~\ref{fig:res_k-anonymity} we compare $\OptQL$ with $\PL$ in terms of the utility for data analysts.
%We then compared $\OptQLdel$ with $\PLdel$ in terms of the trade-off between $k$-anonymity and the number of deleted locations. 
The graphs show the ratio of deleted users for $\kappa=6.689\times 10^{-4}$ ($k = 10$) on the left and for $\kappa=6.689\times 10^{-3}$ ($k = 100$) on the right.
As for $\PL$ we also show the ratio of users reporting $\bot$ as obfuscated regions (indicated as $\PLdel$ ($\bot$)).

According to these graphs,
the ratio of deleted users is significantly smaller in $\OptQLdel$ than in $\PLdel$.
To see this in detail, 
we present the maps of Manhattan that plot the density of the user locations without noise (Fig.~\ref{fig:map:original}), and of those obfuscated by $\PL$ (Fig.~\ref{fig:map:PL}) and by $\OptQL$ (Fig.~\ref{fig:map:OptQL}).
%the original/obfuscated locations in Fig.~\ref{fig:res_histogram}. 
%The left panel of Fig.~\ref{fig:res_histogram} shows the histogram of the original locations $x_1, \cdots, x_n$. 
%The middle (resp.~right) panel of Fig.~\ref{fig:res_histogram} shows the histogram of the obfuscated locations $y_1, \cdots, y_n$ in $\PL$ (resp.~$\OptQL$). 

%histogram becomes blurred in $\PL$. 
%On the other hand, 
%\textit{many people are concentrated in the crowded regions in $\OptQL$}. 
In Fig~\ref{fig:map:PL} we see that the planar Laplacian $\PL$ spreads the population over the whole map.
This is because $\PL$ uniformly draws an angle (from $[0, 2\pi)$) to which it maps each location.
For $\varepsilon\approx0$, the reported regions are distributed almost uniformly.
Hence for a small value of $k$, only a few obfuscated regions need to be deleted to achieve $k$-anonymity (Fig.~\ref{fig:res_k-anonymity} on the left), whereas for a large value of $k$, most of the obfuscated locations need to be deleted (Fig.~\ref{fig:res_k-anonymity} on the right).

In contrast to $\PL$, \textit{the optimal geo-indistinguishable mechanism $\OptQL$ concentrates more users in the crowded regions}  as shown in Fig.~\ref{fig:map:OptQL}.
To see this in detail, we note that for a more crowded region $x$, the prior probability $\pi_x$ is larger.
%let us consider an original region $x$.
Since $\OptQL$ tries to minimize $\sum_{x,y} \pi_{x}Q_{x y} d(x,y)$, if $\pi_x$ is larger then $\OptQL$ chooses a region $y$ with a smaller $d(x,y)$, i.e., closer to $x$.
%for a larger $\pi_x$, it chooses a region $y$ that has a smaller distance $d(x,y)$ to $x$.
Hence the users located in the crowded regions tend not to move by the obfuscation. 
%Recall that $\OptQL$ finds a solution to the problem that minimizes 
%the expected distance between the original location $x_i$ and the obfuscated location $y_i$. 
%Then, it concentrates people in the crowded regions, 
%because people in the crowded regions do not have to move after obfuscation (as shown in Fig.~\ref{fig:map:OptQL}). 
Conversely, the users outside the crowded regions tend to move to one of the closest crowded regions that provide geo-indistinguishability.

Owing to this concentration, $\OptQL$ provides $(\kappa, \alpha)$-asymptotic anonymity with a smaller error rate $\alpha$. 
For instance, in $\OptQL$, only $161$ users do not satisfy $10$-anonymity ($\alpha = 0.011$), whereas in $\PL$, $773$ users do not ($\alpha = 0.052$).
%For example, in $\OptQL$ the number of users who do not satisfy $10$-anonymity is only $161$ ($\alpha = 0.011$), whereas in $\PL$ the number of such users is $773$ ($\alpha = 0.052$).
This means that $\OptQLdel$ removes a smaller number of users than $\PLdel$, and thus has a better utility for data analysts.
%$\OptQLdel$ utilizes this feature of $\OptQL$ and eliminates the minimum number of users to provide $k$-anonymity.

To sum up $\OptQLdel$ is more effective than $\PLdel$ in terms of the utility both for users and for data analysts
while providing $\varepsilon$-geo-indistinguishability and $k$-anonymity.

\subsubsection{Convergence of the empirical value of $\kappa$}
\label{sub:anonymity-more-users}
\begin{figure}[t]
  \centering
   \includegraphics[width=0.99\linewidth]{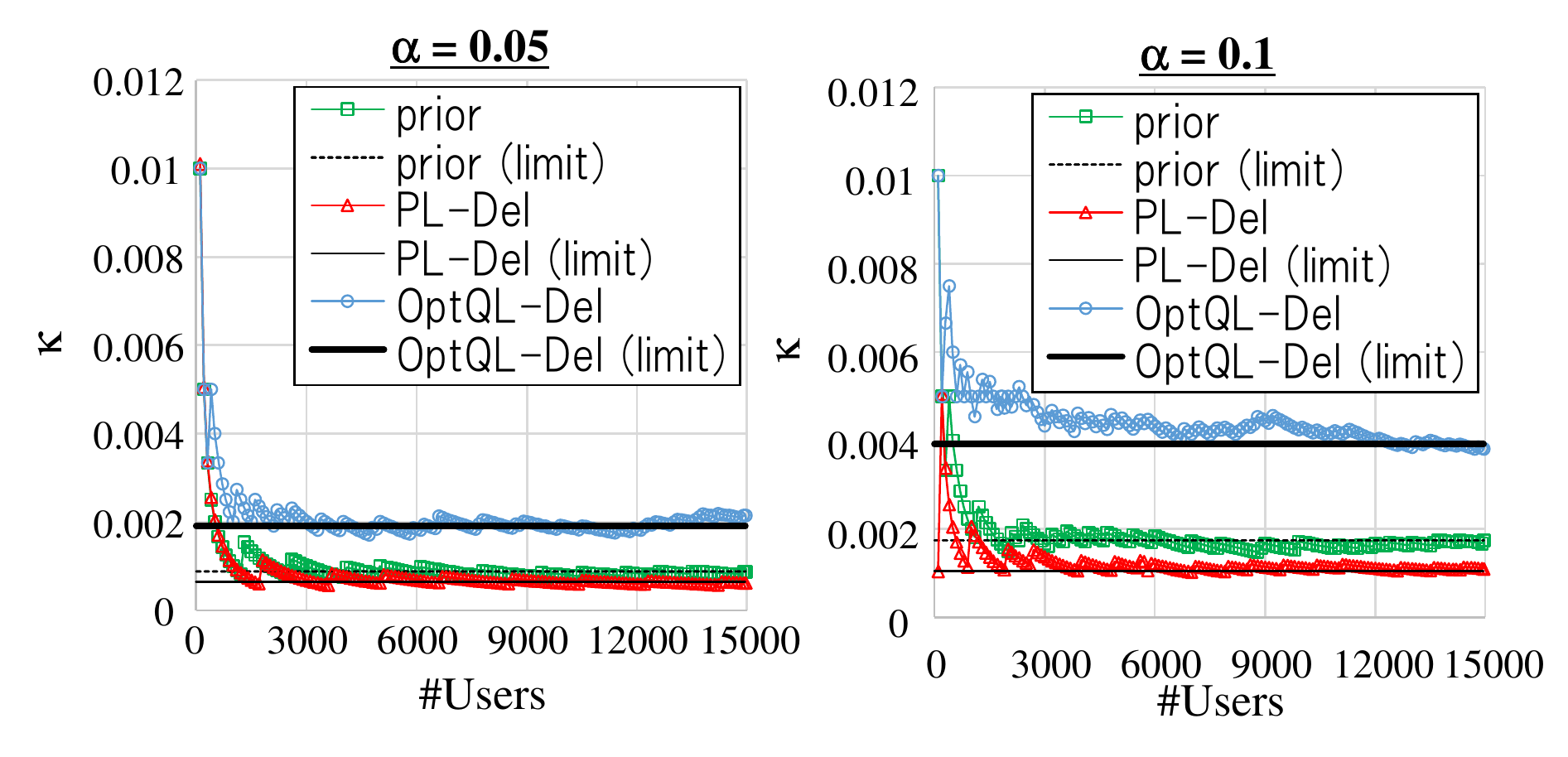}%
  \vspace{-2mm}
  \caption{Relationship between the number of users and the level $\kappa$ of the asymptotic anonymity for $\alpha=0.05$ on the left and for $\alpha=0.1$ on the right. As for $\PLdel$ we excluded $\bot$ from the computation of $\kappa$ for clarity.}
   \label{fig:res_kappa}
\end{figure}

\begin{table}[t]
\centering
\caption{The level $\kappa$ of asymptotic anonymity converges to the following values when increasing the number of users. \label{table:kappa:theory}}
\begin{tabular}{lcc}
\hline
& $\alpha=0.05$ & $\alpha=0.1$ \\ \hlineB{2}
prior (without noise) 
& $8.7\times 10^{-4}$ & $1.7\times 10^{-3}$ \\ \hline
after applying $\PLdel$ 
& $6.4\times 10^{-4}$ & $1.0\times 10^{-3}$ \\ \hline
after applying $\OptQLdel$ 
& $1.9\times 10^{-3}$ & $3.9\times 10^{-3}$ \\ \hline \\[-2ex]
\end{tabular}
\end{table}

In Fig.~\ref{fig:res_kappa} we show how the empirically computed value of $\kappa$ converges to the value displayed in Table~\ref{table:kappa:theory} when increasing the number $n'$ of users.
In the experiments we uniformly sampled a subset (of size $n'$) from the original dataset, applied each mechanism with $\varepsilon = 1$, and computed the maximum $\kappa$ such that $n'(1-\alpha)$ users satisfy $n'\kappa$-anonymity (for $\alpha = 0.05,\, 0.1$).
These graphs imply that $\kappa$ is (roughly) independent of $n'$ and thus $\kappa$-asymptotic anonymity can be seen as a property of the prior and obfuscater.
Therefore $\kappa$ is useful to learn that given a different number $n$ of sampled users, the dataset roughly satisfies $n\kappa$-anonymity.

\section{Conclusion}
\label{sec:conclusion}
We have empirically evaluated the anonymity of the location data obfuscated by $\PL$ and by $\OptQL$,
and shown that $\OptQL$ provides stronger anonymity than $\PL$ 
in the sense that it requires to remove a fewer users to achieve $k$-anonymity.
To analyze this formally, 
we have introduced the notion of $(\kappa, \alpha)$-asymptotic anonymity.
We have also demonstrated that $\OptQL$ has better utility for users and for data analysts.

In future work we plan to develop a utility-optimal obfuscater satisfying geo-indistinguishability and anonymity.
We will also explore rigorous foundations of obfuscation based on statistics, and relationships with quantitative information flow.

\bibliographystyle{IEEEtran}
\bibliography{short}

%\newpage
%\pagelimitmarker{5}
\end{document}
\newpage

\section{Ease of Use}

\subsection{\LaTeX-Specific Advice}

Please use ``soft'' (e.g., \verb|\eqref{Eq}|) cross references instead
of ``hard'' references (e.g., \verb|(1)|). 
%That will make it possible
%to combine sections, add equations, or change the order of figures or
%citations without having to go through the file line by line.
%
%Please don't use the \verb|{eqnarray}| equation environment. 
Use \verb|{align}| or \verb|{IEEEeqnarray}| instead. 
%The \verb|{eqnarray}|
%environment leaves unsightly spaces around relation symbols.
%
%Please note that the \verb|{subequations}| environment in {\LaTeX}
%will increment the main equation counter even when there are no
%equation numbers displayed. If you forget that, you might write an
%article in which the equation numbers skip from (17) to (20), causing
%the copy editors to wonder if you've discovered a new method of
%counting.
%
%{\BibTeX} does not work by magic. It doesn't get the bibliographic
%data from thin air but from .bib files. If you use {\BibTeX} to produce a
%bibliography you must send the .bib files. 
%
%{\LaTeX} can't read your mind. If you assign the same label to a
%subsubsection and a table, you might find that Table I has been cross
%referenced as Table IV-B3. 
%
%{\LaTeX} does not have precognitive abilities. If you put a
%\verb|\label| command before the command that updates the counter it's
%supposed to be using, the label will pick up the last counter to be
%cross referenced instead. In particular, a \verb|\label| command
%should not go before the caption of a figure or a table.
%
%Do not use \verb|\nonumber| inside the \verb|{array}| environment. It
%will not stop equation numbers inside \verb|{array}| (there won't be
%any anyway) and it might stop a wanted equation number in the
%surrounding equation.

\subsection{Some Common Mistakes}\label{SCM}
\begin{itemize}
\item The word ``data'' is plural, not singular.
%\item The subscript for the permeability of vacuum $\mu_{0}$, and other common scientific constants, is zero with subscript formatting, not a lowercase letter ``o''.
\item In American English, commas, semicolons, periods, question and exclamation marks are located within quotation marks only when a complete thought or name is cited, such as a title or full quotation. When quotation marks are used, instead of a bold or italic typeface, to highlight a word or phrase, punctuation should appear outside of the quotation marks. A parenthetical phrase or statement at the end of a sentence is punctuated outside of the closing parenthesis (like this). (A parenthetical sentence is punctuated within the parentheses.)
\item A graph within a graph is an ``inset'', not an ``insert''. The word alternatively is preferred to the word ``alternately'' (unless you really mean something that alternates).
\item Do not use the word ``essentially'' to mean ``approximately'' or ``effectively''.
\item In your paper title, if the words ``that uses'' can accurately replace the word ``using'', capitalize the ``u''; if not, keep using lower-cased.
\item Be aware of the different meanings of the homophones ``affect'' and ``effect'', ``complement'' and ``compliment'', ``discreet'' and ``discrete'', ``principal'' and ``principle''.
\item Do not confuse ``imply'' and ``infer''.
\item The prefix ``non'' is not a word; it should be joined to the word it modifies, usually without a hyphen.
\item There is no period after the ``et'' in the Latin abbreviation ``et al.''.
\item The abbreviation ``i.e.'' means ``that is'', and the abbreviation ``e.g.'' means ``for example''.
\end{itemize}
An excellent style manual for science writers is \cite{b7}.

\subsection{Figures and Tables}
\paragraph{Positioning Figures and Tables} Place figures and tables at the top and 
bottom of columns. Avoid placing them in the middle of columns. Large 
figures and tables may span across both columns. Figure captions should be 
below the figures; table heads should appear above the tables. Insert 
figures and tables after they are cited in the text. Use the abbreviation 
``Fig.~\ref{fig}'', even at the beginning of a sentence.

\begin{table}[htbp]
\caption{Table Type Styles}
\begin{center}
\begin{tabular}{|c|c|c|c|}
\hline
\textbf{Table}&\multicolumn{3}{|c|}{\textbf{Table Column Head}} \\
\cline{2-4} 
\textbf{Head} & \textbf{\textit{Table column subhead}}& \textbf{\textit{Subhead}}& \textbf{\textit{Subhead}} \\
\hline
copy& More table copy$^{\mathrm{a}}$& &  \\
\hline
\multicolumn{4}{l}{$^{\mathrm{a}}$Sample of a Table footnote.}
\end{tabular}
\label{tab1}
\end{center}
\end{table}
%
%\begin{figure}[htbp]
%\centerline{\includegraphics{fig1.png}}
%\caption{Example of a figure caption.}
%\label{fig}
%\end{figure}
%
%Figure Labels: Use 8 point Times New Roman for Figure labels. Use words 
%rather than symbols or abbreviations when writing Figure axis labels to 
%avoid confusing the reader. As an example, write the quantity 
%``Magnetization'', or ``Magnetization, M'', not just ``M''. If including 
%units in the label, present them within parentheses. Do not label axes only 
%with units. In the example, write ``Magnetization (A/m)'' or ``Magnetization 
%\{A[m(1)]\}'', not just ``A/m''. Do not label axes with a ratio of 
%quantities and units. For example, write ``Temperature (K)'', not 
%``Temperature/K''.

\section*{Acknowledgment}
Put sponsor acknowledgments in the unnumbered footnote on the first page.

\end{document}